\documentclass[a4paper, titlepage, oneside, 11pt]{amsart}
\usepackage{amsmath,amssymb,amsfonts,amstext,amscd,cite,geometry,graphicx,latexsym,mathptmx,xcolor,pstricks,amscd,mathrsfs,amsthm}
\usepackage[utf8]{inputenc}
\usepackage{newunicodechar}
\usepackage[all]{xy}

\newtheorem{lem}{Lemma}[section]
\newtheorem{prop}[lem]{Proposition}
\newtheorem{thm}[lem]{Theorem}
\newtheorem{cor}[lem]{Corollary}

\def\R{{\mathbb{R}}}
\def\N{{\mathbb{N}}}
\def\Z{{\mathbb{Z}}}

\numberwithin{equation}{section}

\allowdisplaybreaks

\begin{document}

\title[Ultra-discrete Toda lattice and Pitman's transformation]{Dynamics of the ultra-discrete Toda lattice\\via Pitman's transformation}

\author[D.~A.~Croydon]{David A. Croydon}
\address{Research Institute for Mathematical Sciences, Kyoto University, Kyoto 606--8502, Japan}
\email{croydon@kurims.kyoto-u.ac.jp}

\author[M.~Sasada]{Makiko Sasada}
\address{Graduate School of Mathematical Sciences, University of Tokyo, 3-8-1, Komaba, Meguro-ku, Tokyo, 153--8914, Japan}
\email{sasada@ms.u-tokyo.ac.jp}

\author[S.~Tsujimoto]{Satoshi Tsujimoto}
\address{Department of Applied Mathematics and Physics, Graduate School of Informatics, Kyoto University, Sakyo-ku, Kyoto 606--8501, Japan}
\email{tujimoto@i.kyoto-u.ac.jp}

\begin{abstract}
By encoding configurations of the ultra-discrete Toda lattice by piecewise linear paths whose gradient alternates between $-1$ and $1$, we show that the dynamics of the system can be described in terms of a shifted version of Pitman's transformation (that is, reflection in the past maximum of the path encoding). This characterisation of the dynamics applies to finite configurations in both the non-periodic and periodic cases, and also admits an extension to infinite configurations. The latter point is important in the study of invariant measures for the ultra-discrete Toda lattice, which is pursued in the parallel work \cite{CSMontreal}. We also describe a generalisation of the result to a continuous version of the box-ball system, whose states are described by continuous functions whose gradient may take values other than $\pm1$.
\end{abstract}

\keywords{ultra-discrete Toda lattice, Pitman's transformation, box-ball system on $\R$}

\subjclass[2010]{37B15 (primary), 82B99 (secondary)}

\date{\today}

\maketitle

\section{Introduction}

To introduce and motivate this study of the dynamics of the ultra-discrete Toda lattice, which were first defined in \cite{Nagai} (another version of the ultra-discrete Toda lattice was described in \cite{MSTTT}), we start by recalling the related box-ball system (BBS) of \cite{takahashi1990}. We note that the link between the ultra-discrete Toda lattice and the BBS was initially presented in \cite{Nagai:sort}. In particular, states of the BBS are particle configurations $\eta=(\eta_n)_{n\in\mathbb{Z}}\in\{0,1\}^\mathbb{Z}$, where we write $\eta_n=1$ if there is a particle at site $n$, and $\eta_n=0$ otherwise. For simplicity, we begin by supposing that there is a finite number of particles, i.e.\ $\sum_{n\in\mathbb{Z}}\eta_n<\infty$. The system evolution is then described by means of a particle `carrier', which moves along $\mathbb{Z}$ from left to right (that is, from negative to positive), being initially empty until it meets the first particle, picking up a particle when it crosses one, and dropping off a particle when it is holding at least one particle and sees a space (see the left-hand side of Figure \ref{bbsfig}). Writing $\mathcal{T}\eta$ for the new configuration after one step of the dynamics, this description can be formalised in terms of the ultra-discrete Korteweg-de Vries (KdV) equation:
\begin{equation}\label{udkdv}
(\mathcal{T}\eta)_{n}=\min\left\{1-\eta_{n},\sum_{m=-\infty}^{n-1}\left(\eta_m - (\mathcal{T}\eta)_m\right)\right\},
\end{equation}
where we suppose $(\mathcal{T}\eta)_n=0$ for $n\leq \inf\{m:\:\eta_m=1\}$, so that the sums in the above definition are well-defined. An alternative characterisation of these dynamics can be given in terms of the vector $((Q_n)_{n=1}^N, (E_n)_{n=1}^{N-1})$, where $Q_n\in\mathbb{N}$ is the length of the $n$th string of consecutive particles, and $E_n\in\mathbb{N}$ is the length of string of empty spaces between the $n$th and $(n+1)$st particle string (see the right-hand side of Figure \ref{bbsfig}). It is then possible to check that after one time step of the BBS dynamics, the lengths of strings of particles and empty spaces in $\mathcal{T}\eta$ is given by a vector $(((\mathcal{T}Q)_n)_{n=1}^N, ((\mathcal{T}E)_n)_{n=1}^{N-1})\in\mathbb{N}^{2N-1}$ of the same length as  $((Q_n)_{n=1}^N, (E_n)_{n=1}^{N-1})$ that arises as the solution of the ultra-discrete Toda lattice equation:
\begin{align}
(\mathcal{T}Q)_{n} &=\min \left\{ \sum_{k=1}^n Q_k-\sum_{k=1}^{n-1}(\mathcal{T}Q)_k,E_{n}\right\},\nonumber\\
(\mathcal{T}E)_n &=Q_{n+1}+E_n-(\mathcal{T}Q)_n,\label{dynamics}
\end{align}
where for the purposes of these equations we suppose that $E_N=\infty$. Whilst the BBS naturally gives rise to vectors with integer-valued entries, the dynamics given by \eqref{dynamics} makes sense for any vector $((Q_n)_{n=1}^N, (E_n)_{n=1}^{N-1})\in(0,\infty)^{2N-1}$; it is this system that we call the ultra-discrete Toda lattice, and which will be the focus here.

\begin{figure}[t]
\centering
\includegraphics[width = 0.9\textwidth]{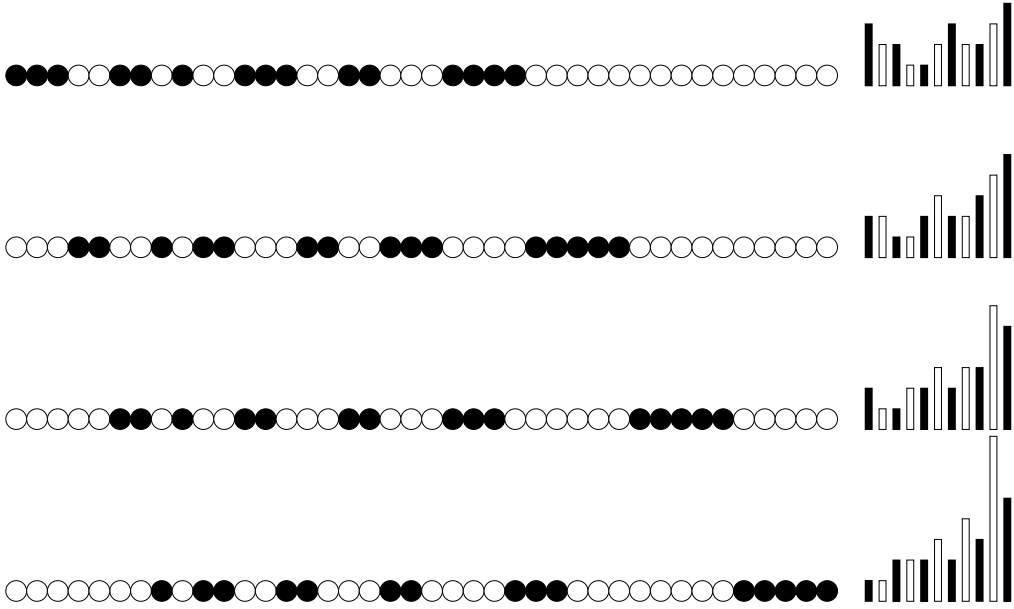}
\caption{Initial configuration and three time evolutions of the BBS (left-hand side), as well as the ultra-discrete Toda lattice representation (right-hand side).}\label{bbsfig}
\end{figure}

Towards studying the BBS from a probabilistic viewpoint, including its invariant measures, in \cite{CKST} a systematic extension of the BBS dynamics to configurations consisting of an infinite number of particles (in both the positive and negative directions) was presented. (Other recent probabilistic studies that have also considered examples of two-sided infinite configurations are \cite{FG, Ferrari}.) At the heart of \cite{CKST} was the observation that the BBS dynamics could be expressed in terms of a certain transformation of an associated path encoding, which we now describe. Firstly, let $\eta\in\{0,1\}^{\mathbb{Z}}$ be a particle configuration (now no longer restricted to a finite number of particles), and then define a path encoding of $\eta$, denoted  $S:\mathbb{Z}\rightarrow\mathbb{Z}$, by setting $S_0:=0$, and
\begin{equation}\label{bbssdef}
S_n-S_{n-1}:=1-2\eta_n,\qquad \forall n\in\mathbb{Z}.
\end{equation}
We then introduce the transformation of reflection in the past maximum $TS:\mathbb{Z}\rightarrow\mathbb{Z}$ via the relation
\begin{equation}\label{bbspitman}
(TS)_n:=2M_n-S_n-2M_0,\qquad \forall n\in\mathbb{Z},
\end{equation}
where $M_n:=\sup_{m\leq n}S_m$ is the past maximum of $S$; this operation of reflection in the past maximum is well-known in the probabilistic literature as Pitman's transformation (after \cite{Pitman}). Clearly, for $TS$ to be well-defined, we require $M_0<\infty$. If this is the case, then we let $T\eta\in\{0,1\}^\mathbb{Z}$ be the configuration given by
\[(T\eta)_n:=\mathbf{1}_{\{(TS)_n-(TS)_{n-1}=-1\}},\qquad \forall n\in\mathbb{Z},\]
(so that $TS$ is the path encoding of $T\eta$). It is possible to check that the map $\eta\mapsto T\eta$ coincides with the original definition of the BBS dynamics $\eta\mapsto\mathcal{T}\eta$ in the finite particle case \cite[Lemma 2.3]{CKST}, and moreover is consistent with an extension to the case of a bi-infinite particle configuration satisfying $M_0<\infty$ from a natural limiting procedure \cite[Lemma 2.4]{CKST}. Thus this description of the BBS could be seen as a natural generalisation of \eqref{udkdv}.

With a view to studying invariant measures for the ultra-discrete Toda lattice and its periodic variant, which is done in the sister article \cite{CSMontreal}, the principal aim of this work is to show that Pitman's transformation is also a suitable means by which to extend the definition of this model to infinite configurations. The one distinction that should be highlighted, however, is that for the ultra-discrete Toda lattice, we incorporate a spatial shift into the transformation of the associated path encoding. We note that the reason for this shift being needed relates to the fact that the ultra-discrete Toda lattice picture does not retain the information about the spatial location of particles that the BBS system does.

To explain the relevant transformation in detail, we first generalise the notion of a configuration for the ultra-discrete Toda lattice. Specifically, we will now consider states to be elements of the form $((Q_n)_{n={N_1}}^{N_2}, (E_n)_{n=N_1}^{N_2-1})\in(0,\infty)^{2(N_2-N_1)+1}$, where $N_1,N_2\in\mathbb{Z}\cup\{-\infty\}\cup\{\infty\}$ satisfy $N_1\leq 1\leq N_2$. By convention, in the case $N_1$ is finite, we set $E_{N_1-1}=\infty$, and in the case $N_2$ is finite, we set $E_{N_2}=\infty$. Furthermore, we suppose that
\begin{equation}\label{noaccumulation}
\sum_{n=N_1}^{1}\left(Q_n+E_{n-1}\right)=\infty,\qquad \sum_{n=1}^{N_2}\left(Q_n+E_{n}\right)=\infty.
\end{equation}
The collection of such elements $((Q_n)_{n={N_1}}^{N_2}, (E_n)_{n=N_1}^{N_2-1})$ will be denoted by $\mathcal{C}$. Given a member of $\mathcal{C}$, the corresponding path encoding is then obtained by concatenating linear segments of length $Q_{N_1},E_{N_1},Q_{N_1+1},E_{N_1+1},\dots,E_{N_2-1},Q_{N_2}$ whose gradient alternates between $-1$ and $1$, with the path being fixed by supposing that the segment arising from $Q_1$ starts at $0$ and has gradient $-1$. More precisely, for $n\in\{N_1,\dots,N_2\}$, let
\begin{equation}\label{indef}
O_n:=\left[\sum_{k=1}^{n-1}\left(Q_k+E_k\right),\sum_{k=1}^{n-1}\left(Q_k+E_k\right)+Q_n\right],
\end{equation}
(where we apply the convention that $\sum_{j=1}^{0}$ is zero, and $\sum_{j=1}^{-n}=-\sum_{j=-n+1}^{0}$ for $n\in\mathbb{N}$,) and set $O=\cup_{n=N_1}^{N_2}O_n$. We then define $S\in C(\mathbb{R},\mathbb{R})$ by setting, analogously to \eqref{bbssdef},
\begin{equation}\label{udsdef}
S_x:=\int_0^x\left(1-2\mathbf{1}_{\{y\in O\}}\right)dy,\qquad \forall x\in\mathbb{R};
\end{equation}
the set of path encodings will be denoted $\mathcal{S}$. (Note that the conditions set out at \eqref{noaccumulation} are not necessary to ensure the function $S$ is well-defined on the whole of $\mathbb{R}$, but do mean that there is no accumulation of local maxima/minima.) It is easy to see that the operation of mapping an element of $\mathcal{C}$ to its path encoding is a bijection. As at \eqref{bbspitman}, we can define $TS:\mathbb{R}\rightarrow \mathbb{R}$ by setting
\begin{equation}\label{pitman}
(TS)_x:=2M_x-S_x-2M_0,\qquad \forall x\in\mathbb{R},
\end{equation}
where $M_x:=\sup_{y\leq x}S_x$ is the past maximum of $S$, and again we require $M_0<\infty$ for this operation to be well-defined. We will denote by $\mathcal{C}^\mathcal{T}$ the set of configurations in $\mathcal{C}$ whose path encodings satisfy the latter condition, and $\mathcal{S}^\mathcal{T}$ the associated set of path encodings. Now, although $TS$ is again an piecewise linear element of $C(\mathbb{R},\mathbb{R})$ with gradient either $-1$ or $1$ (apart from at local maxima or minima), it is not necessarily an element of $\mathcal{S}$. To obtain a path that is in $\mathcal{S}$, we introduce a shift operator. Indeed, for $x\in\mathbb{R}$ and $S\in C(\mathbb{R},\mathbb{R})$, we characterise $\theta^xS$ by
\[(\theta^xS)_y=S_{x+y}-S_x,\qquad\forall y\in\mathbb{R},\]
let $\tau(S):=\inf\{x\geq 0:\:x\in \mathrm{LM}(S)\}$, where $\mathrm{LM}(S)$ is the set of local maxima of $S$ (for the elements of $C(\mathbb{R},\mathbb{R})$ that will be considered in this article, $\tau(S)$ is always well-defined and finite), and set
\[\theta^\tau(S):=\theta^{\tau(S)}(S).\]
We then define an operator $\mathcal{T}$ on $\mathcal{S}^\mathcal{T}$ by the composition of $T$ and $\theta^\tau$, that is
\begin{equation}\label{curlytpath}
\mathcal{T}S:=\theta^\tau(TS),\qquad \forall S\in\mathcal{S}^\mathcal{T}.
\end{equation}
It is possible to check from the definitions that for $S\in \mathcal{S}^\mathcal{T}$, $\tau(TS)=Q_1<\infty$, which means $\mathcal{T}S$ is well-defined, and moreover $\mathcal{T}S\in\mathcal{S}$ with the values of $N_1$ and $N_2$ preserved (see Lemma \ref{lem:invariant}). Hence we can define an associated configuration $(((\mathcal{T}Q)_n)_{n={N_1}}^{N_2}, ((\mathcal{T}E)_n)_{n=N_1}^{N_2-1})\in\mathcal{C}$ (see Figure \ref{todafig} for an example). The following result demonstrates that this procedure is an extension of the dynamics of the ultra-discrete Toda lattice for finite configurations, as given by \eqref{dynamics}. In Subsection \ref{periodicsec}, we moreover show that the transformation of the path encoding we have described also yields the dynamics of the periodic ultra-discrete Toda lattice, see Theorem \ref{mainthm2} in particular.

\begin{figure}[t]
\begin{flushleft}
\hspace{47pt}\includegraphics[width=0.5\textwidth]{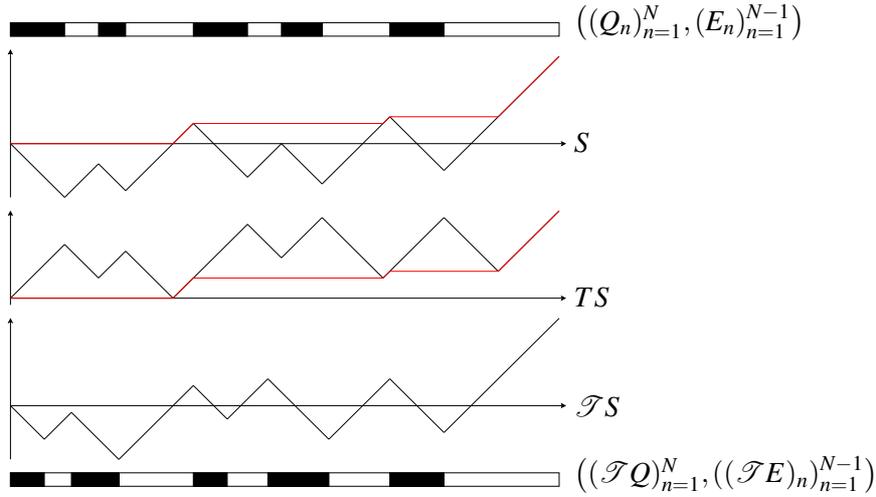}
\rput[tl](0,6.4){$\left((Q_n)_{n=1}^N,(E_n)_{n=1}^{N-1}\right)$}
\rput[tl](0,4.7){$S$}
\rput[tl](0,2.65){$TS$}
\rput[tl](0,1.2){$\mathcal{T}S$}
\rput[tl](0,0.4){$\left((\mathcal{T}Q)_{n=1}^N,((\mathcal{T}E)_n)_{n=1}^{N-1}\right)$}
\end{flushleft}
\caption{Graphical representation of the dynamics of the ultra-discrete Toda lattice in terms of the associated path encodings. NB. The red line in the graphs for $S$ and $TS$ shows the path of $M$.}\label{todafig}
\end{figure}

\begin{thm}\label{mainthm1} Let $N\in\mathbb{N}$, $((Q_n)_{n={1}}^{N}, (E_n)_{n=1}^{N-1})\in(0,\infty)^{2N-1}$, and $S$ represent the associated path encoding, as defined by \eqref{udsdef}. It then holds that $(((\mathcal{T}Q)_n)_{n=1}^N, ((\mathcal{T}E)_n)_{n=1}^{N-1})$, as given by \eqref{dynamics}, has path encoding $\mathcal{T}S$, as defined by \eqref{curlytpath}.
\end{thm}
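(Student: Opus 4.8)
The plan is to read off the up- and down-segments of $\mathcal{T}S=\theta^\tau(TS)$ one by one and match their lengths against the right-hand sides of \eqref{dynamics}. The key device is to identify the ``carrier'' underlying the Toda recursion with the drawdown of $S$ below its running maximum. To set this up, I would first recast \eqref{dynamics}: putting $c_n:=\sum_{k=1}^n Q_k-\sum_{k=1}^n(\mathcal{T}Q)_k$ with $c_0=0$, the first line of \eqref{dynamics} reads $(\mathcal{T}Q)_n=\min\{c_{n-1}+Q_n,E_n\}$, whence $c_n=\max\{0,c_{n-1}+Q_n-E_n\}$, a Lindley-type recursion solved by $c_n=\max_{0\le j\le n}\sum_{k=j+1}^n(Q_k-E_k)$. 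The second line of \eqref{dynamics} then reads $(\mathcal{T}E)_n=Q_{n+1}+E_n-(\mathcal{T}Q)_n$, and since $(\mathcal{T}Q)_n\le E_n$ we have $(\mathcal{T}E)_n\ge Q_{n+1}>0$, so no segment will degenerate.

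Next I would match this to the geometry of $S$. Writing $L_n:=\sum_{k=1}^n(E_k-Q_k)$ for the height of the $n$th local maximum of $S$ (with $L_0=0=S_0$), which sits at $x_n:=\sum_{k=1}^n(Q_k+E_k)$, and $\mathcal{M}_n:=\max_{0\le j\le n}L_j$ for the running maximum through the $n$th local max, the solution formula above gives $c_n=\mathcal{M}_n-L_n$. Because $N_1=1$ forces $E_0=\infty$, one has $S_x=x$ for $x\le 0$, so $M_0=\sup_{y\le 0}S_y=0$ and hence $(TS)_x=2M_x-S_x$ by \eqref{pitman}. The crucial observation is the gradient dichotomy for $TS$: from $(TS)_x'=2M_x'-S_x'$ one sees that $TS$ has gradient $+1$ wherever $S$ is decreasing (the $Q_n$-segments) and wherever $S$ is increasing along a new record (the ``above-record'' parts of the $E_n$-segments), while $TS$ has gradient $-1$ wherever $S$ is increasing strictly below its running maximum (the ``below-record'' parts of the $E_n$-segments).

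I would then compute the resulting lengths. On the $n$th up-segment of $S$, of length $E_n$, the path climbs from $m_n:=L_{n-1}-Q_n$ while $M\equiv\mathcal{M}_{n-1}$ until (if ever) $S$ reaches the old record; hence its below-record part has length $\min\{E_n,\mathcal{M}_{n-1}-m_n\}=\min\{E_n,c_{n-1}+Q_n\}=(\mathcal{T}Q)_n$, producing a down-segment of $TS$ of exactly this length, and its above-record part has length $E_n-(\mathcal{T}Q)_n\ge 0$. Reading $TS$ from left to right, the stretch lying between the down-segments of lengths $(\mathcal{T}Q)_n$ and $(\mathcal{T}Q)_{n+1}$ is the concatenation of this above-record part with the following $Q_{n+1}$-segment of $S$; as both have gradient $+1$ they merge into a single up-segment, of length $(E_n-(\mathcal{T}Q)_n)+Q_{n+1}=(\mathcal{T}E)_n$. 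This merging (valid even when the above-record part has length $0$) is precisely the point at which \eqref{dynamics} reproduces the segment structure, and I would check the induction $c_n=\mathcal{M}_n-L_n$ propagates through each local maximum to keep the bookkeeping consistent.

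Finally, the only part of $TS$ preceding its first down-segment is the up-segment of length $Q_1$ arising from the initial $Q_1$-segment of $S$, so $\tau(TS)=Q_1$ (recovering the claim preceding the theorem), and $\theta^\tau$ strips exactly this piece; for $y<0$ the shifted path continues upward, matching $(\mathcal{T}E)_0=\infty$, and after the terminal down-segment $(\mathcal{T}Q)_N=c_{N-1}+Q_N$ the path rises forever, matching $(\mathcal{T}E)_N=\infty$. Hence $\mathcal{T}S$ is the path encoding of $(((\mathcal{T}Q)_n)_{n=1}^N,((\mathcal{T}E)_n)_{n=1}^{N-1})$ in the sense of \eqref{indef}--\eqref{udsdef}, as required. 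I expect the main obstacle to be not any single estimate but the careful handling of the running maximum: making the gradient dichotomy rigorous at the junction points where $S$ meets (or fails to meet) its record, and verifying that consecutive up-portions of $TS$ amalgamate in exactly the way needed to read off $(\mathcal{T}E)_n$.
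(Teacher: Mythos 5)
Your proposal is correct, and its geometric core coincides with the paper's proof: both read off the segments of $TS$ from the record structure of $S$ (your ``gradient dichotomy'' is exactly the content of the paper's Lemma \ref{lem:invariant}, whose junction points $c_n$ --- not to be confused with your carrier variable $c_n$ --- are precisely the points where $S$ recovers its old record or reaches its next local maximum), and both rest on identifying the Toda carrier quantity $\sum_{k=1}^n Q_k-\sum_{k=1}^{n-1}(\mathcal{T}Q)_k$ with the drawdown $M_{b_n}-S_{b_n}$, which is the paper's key identity \eqref{eq:W}. The one genuine difference is how that identity is established: the paper proves it by induction on $n$, with a two-case analysis ($S_{a_{n+1}}\le M_{b_n}$ or not) in the inductive step, whereas you solve the Lindley-type recursion $c_n=\max\{0,c_{n-1}+Q_n-E_n\}$ in closed form as $c_n=\max_{0\le j\le n}\sum_{k=j+1}^n(Q_k-E_k)$, which equals $\mathcal{M}_n-L_n$ by a one-line rearrangement of maxima. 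This is slightly cleaner: it removes the induction and the case analysis, and turns the carrier--drawdown identification into an algebraic identity rather than a consequence tracked through the dynamics; it also hands you the explicit solution of the Toda recursion as a by-product. What the paper's route buys instead is reusability: its Lemma \ref{lem:invariant} is stated for general $N_1,N_2$ (including infinite configurations), so the identical geometric analysis is invoked verbatim for the periodic case in Theorem \ref{mainthm2}, whereas your inline treatment leans on features special to the finite case $N_1=1$, $N_2=N$ (namely $S_x=x$ for $x\le 0$, hence $M_0=0$, and the infinite final up-segment), which suffices for Theorem \ref{mainthm1} but would need reworking elsewhere.
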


In the final section of the article, Section \ref{bbsr}, we generalise the above result to a version of the box-ball system on $\mathbb{R}$, as described in \cite{CKST}, where states of the system are described by a certain class of continuous functions, and the dynamics are given by Pitman's transformation \eqref{pitman}. Perhaps most significantly, the result we establish (see Proposition \ref{p31} below) allows us to handle functions $S$ whose gradient is not restricted to $\pm1$, which is potentially relevant in the study of the scaling limits of box-ball systems where the box capacity is not restricted to one, see \cite{CSDual} for background on path encodings for such a model.

\section{Dynamics of the path encoding}

To check the claims of the introduction as we do in this section, it will be convenient to consider a decomposition of $\mathcal{C}^\mathcal{T}$ into subsets. To this end, for $N_1,N_2\in\mathbb{Z}\cup\{-\infty\}\cup\{\infty\}$ with $N_1\leq 1\leq N_2$, let $\mathcal{C}_{N_1,N_2}$ be the subset of $\mathcal{C}$ consisting of elements of the form  $((Q_n)_{n={N_1}}^{N_2}, (E_n)_{n=N_1}^{N_2-1})$, and let $\mathcal{S}_{N_1,N_2}$ be the set of associated path encodings. It is easy to see that the latter set is given by functions $S:\mathbb{R}\rightarrow\mathbb{R}$ of the form
\[S_x=\int_0^x\left(1-2\mathbf{1}_{\{y\in\cup_{n=N_1}^{N_2}[a_n,b_n]\}}\right)dy,\]
where $([a_n,b_n])_{n=N_1}^{N_2}$ is a collection of disjoint intervals such that: $a_1=0$ and $a_n$ increases with $n$; $a_n<b_n$ for each $n$; and only a finite number of intervals intersect any compact set. Although this description is a little cumbersome, it is useful to introduce notation for the points $(a_n)_{n=N_1}^{N_2}$, which correspond to the local maxima of $S$, and $(b_n)_{n=N_1}^{N_2}$, which correspond to the local minima of $S$; we will sometimes write $a_n=a_n(S)$ and $b_n=b_n(S)$ when we wish to stress which path is being considered. We moreover note that if $S$ is the path encoding of $((Q_n)_{n={N_1}}^{N_2}, (E_n)_{n=N_1}^{N_2-1})$, then $[a_n,b_n]$ is the interval $O_n$, as defined at \eqref{indef}. We further define
\[\mathcal{C}^{\mathcal{T}}_{N_1,N_2}:=\mathcal{C}^\mathcal{T}\cap \mathcal{C}_{N_1,N_2},\]
which corresponds to the set of path encodings given by
\[\mathcal{S}^{\mathcal{T}}_{N_1,N_2}:=\mathcal{S}^\mathcal{T}\cap \mathcal{S}_{N_1,N_2}.\]
The key step towards showing that the path transformation $S\mapsto\mathcal{T}S$ yields the dynamics of the ultra-discrete Toda lattice is the following lemma.

\begin{lem}\label{lem:invariant}  Let $N_1,N_2\in\mathbb{Z}\cup\{-\infty\}\cup\{\infty\}$ be such that $N_1\leq 1\leq N_2$.
For any $S \in \mathcal{S}^{\mathcal{T}}_{N_1,N_2}$, it holds that $\mathcal{T}S \in \mathcal{S}_{N_1,N_2}$. Moreover, the elements of $(a_n(\mathcal{T}S),b_n(\mathcal{T}S))_{n=N_1}^{N_2}$ are given by
\begin{align*}
a_n(\mathcal{T}S)&=b_{n}(S)-b_1(S),\\
b_n(\mathcal{T}S)&=\min\left\{b_n(S)+M_{b_n(S)}-S_{b_n(S)},a_{n+1}(S)\right\}-b_1(S),
\end{align*}
where we define $a_{N_2+1}:=\infty$ in the case $N_2<\infty$.
\end{lem}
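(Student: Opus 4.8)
The plan is to track how Pitman's transformation acts on the local maxima and minima of a path encoding $S\in\mathcal{S}^{\mathcal{T}}_{N_1,N_2}$, and then read off the effect of the subsequent spatial shift $\theta^\tau$. Recall that the local maxima of $S$ are the points $(a_n(S))$ and the local minima are the points $(b_n(S))$, with the segment from $a_n$ to $b_n$ being the interval $O_n$ of length $Q_n$ (gradient $-1$), and the segment from $b_n$ to $a_{n+1}$ having length $E_n$ (gradient $+1$). The first step is to understand the past-maximum process $M_x=\sup_{y\le x}S_y$. On each descending segment $[a_n,b_n]$, $S$ is decreasing, so $M$ is constant there; on each ascending segment $[b_n,a_{n+1}]$, $S$ may eventually exceed the previous running maximum and start to increase along with $S$. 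Thus $M$ is a nondecreasing piecewise-linear function whose gradient is either $0$ or $1$, and whose points of increase are exactly the intervals where $S$ is setting a new record.

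Next I would compute $TS=2M-S-2M_0$ directly from this picture. Since $TS'=2M'-S'$ wherever the derivatives exist, on a descending segment of $S$ (where $M'=0$, $S'=-1$) we get $(TS)'=+1$, and on an ascending segment of $S$ (where $S'=+1$) we get $(TS)'=+1$ when $M'=0$ (i.e.\ $S$ below its past max) and $(TS)'=-1$ when $M'=1$ (i.e.\ $S$ at its past max). In particular, $TS$ is again an alternating $\pm1$ path, and the crucial observation is that $TS$ switches from gradient $+1$ to gradient $-1$ exactly at the local minima $b_n(S)$ of $S$ (these become the local maxima of $TS$), while $TS$ switches from $-1$ to $+1$ at the points where $S$, on its ascending segment from $b_n$, first recovers its past maximum --- that is, at the first time after $b_n$ that $S_x=M_x$. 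This last point is either where $S$ genuinely reattains $M_{b_n(S)}$, occurring at $x=b_n(S)+\bigl(M_{b_n(S)}-S_{b_n(S)}\bigr)$, or, if $S$ reaches the next local maximum $a_{n+1}(S)$ before recovering that level, the switch happens at $a_{n+1}(S)$ instead. This dichotomy is precisely the source of the minimum in the stated formula for $b_n(\mathcal{T}S)$.

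Having identified the local maxima of $TS$ as the points $b_n(S)$ and the local minima of $TS$ as the points $\min\{b_n(S)+M_{b_n(S)}-S_{b_n(S)},\,a_{n+1}(S)\}$, the final step is to apply the shift $\theta^\tau$. By definition $\tau(TS)=\inf\{x\ge0:\,x\in\mathrm{LM}(TS)\}$, and since the smallest nonnegative local maximum of $TS$ is $b_1(S)$ (the image of the first descending segment starting at $a_1=0$), shifting by $b_1(S)$ subtracts $b_1(S)$ from every coordinate, yielding exactly the claimed formulas. I should also verify that the resulting path lies in $\mathcal{S}_{N_1,N_2}$: that the new maxima $a_n(\mathcal{T}S)=b_n(S)-b_1(S)$ are increasing in $n$ with $a_1(\mathcal{T}S)=0$, that $a_n(\mathcal{T}S)<b_n(\mathcal{T}S)$ (equivalently $(\mathcal{T}Q)_n>0$, which reduces to the two cases above both giving a strictly positive length), and that $N_1,N_2$ are preserved.

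I expect the main obstacle to be the careful bookkeeping around the two boundary behaviours: first, the dichotomy in whether $S$ recovers its past maximum within the ascending segment $[b_n(S),a_{n+1}(S)]$ or not, which must be handled cleanly to produce the $\min$ (and one must check it is well-defined when $N_2<\infty$ via the convention $a_{N_2+1}=\infty$); and second, ensuring the record-setting structure of $M$ is correctly matched to gradient switches of $TS$ at the endpoints, including confirming that $M_0<\infty$ (guaranteed by $S\in\mathcal{S}^{\mathcal{T}}$) makes all quantities finite and that the no-accumulation condition \eqref{noaccumulation} is inherited by $\mathcal{T}S$. The derivative-level argument is routine, but translating it into exact endpoint locations --- and verifying strict positivity of the new segment lengths in every case --- is where the real care is needed.
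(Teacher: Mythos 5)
Your overall strategy is exactly that of the paper's proof: describe the past-maximum process $M$ segment by segment, read off the gradient of $TS=2M-S-2M_0$ on each piece, identify the local maxima of $TS$ as the points $b_n(S)$ and its local minima as $c_n=\min\{b_n(S)+M_{b_n(S)}-S_{b_n(S)},a_{n+1}(S)\}$, and then shift by $\tau(TS)=b_1(S)$. The dichotomy you describe for the switch point (whether $S$ reattains the level $M_{b_n(S)}$ before or after reaching $a_{n+1}(S)$) is precisely how the paper verifies the $\min$ formula.

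However, there is a concrete error in your gradient computation, and as written your argument is internally inconsistent. From $(TS)'=2M'-S'$: on an ascending segment where $S$ is strictly below its past maximum, $M'=0$ and $S'=+1$, so $(TS)'=-1$ (you claim $+1$); on an ascending segment where $S$ is at its past maximum, $M'=1$ and $S'=+1$, so $(TS)'=+1$ (you claim $-1$). With the gradients as you state them, $TS$ would not change direction at $b_n(S)$ at all: it would have a local maximum at $c_n$ and a local minimum at $a_{n+1}(S)$, contradicting the ``crucial observation'' you assert in the very next sentence --- which is the correct statement, and the one on which the rest of your argument (and the lemma) depends. The fix is simply to swap the two labels, after which your argument coincides with the paper's. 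Beyond this, two smaller points that the paper treats explicitly and you leave implicit: the left boundary case $N_1>-\infty$, where $S_x=M_x$ for all $x\le a_{N_1}(S)$ and hence $TS$ has gradient $+1$ there; and the strict positivity of the new segment lengths (for instance $c_n>b_n(S)$ follows from $M_{b_n(S)}\ge S_{a_n(S)}>S_{b_n(S)}$), which you correctly flag as needing verification but do not carry out.
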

\begin{proof} In the proof, we will denote $a_n(S),b_n(S)$ by $a_n,b_n$ for simplicity. For $S \in\mathcal{S}^{\mathcal{T}}_{N_1,N_2}$, it holds that
\begin{equation}\label{mbc1}
M_x=M_{a_n} \ \text{for} \ a_n \le x \le b_n
\end{equation}
for each $n$. It is also the case that, for each $n<N_2$, there exists a unique $c_n\in (b_n,a_{n+1}]$ such that
\begin{equation}\label{mbc2}
M_x=\left\{\begin{array}{ll}
               M_{b_n}=M_{a_n}, & \mbox{for }b_n\leq x\leq c_n,\\
               S_x, & \mbox{for }c_n\leq x<a_{n+1}.
             \end{array}\right.
\end{equation}
In the case $N_2<\infty$, the previous claim is also true for $n=N_2$ in the sense that there exists a unique $c_n\in(b_n,\infty)$ such that \eqref{mbc2} holds with $a_{N_2+1}=\infty$. Finally, if $N_1>-\infty$, observe that
\begin{equation}\label{mbc3}
M_x=x+S_{a_{N_1}}-a_{N_1} \ \text{for} \ x \le a_{N_1}.
\end{equation}
From \eqref{mbc1}, \eqref{mbc2} and \eqref{mbc3}, we obtain that
\[TS_x-TS_y=2M_{a_n}-S_x-2M_0-(2M_{a_n}-S_y+2M_0)=S_y-S_x=x-y \ \text{for} \ a_n\le x,y \le b_n,\]
\[TS_x-TS_y=2M_{a_n}-S_x-2M_0-(2M_{a_n}-S_y+2M_0)=S_y-S_x=y-x \ \text{for} \ b_n\le x,y \le c_n,\]
and
\[TS_x-TS_y=2S_x-S_x-2M_0-(2S_y-S_y+2M_0)=S_x-S_y=x-y \ \text{for} \ c_n\le x,y < a_{n+1},\]
for each $n$, where we again interpret $a_{N_2+1}=\infty$ in the case $N_2<\infty$. Moreover, in the case when $N_1>-\infty$, we have that
\begin{align*}
TS_x-TS_y&=2(x+S_{a_{N_1}}-a_{N_1})-S_x-2M_0-(2(y+S_{a_{N_1}}-a_{N_1})-S_y-2M_0)\\
&=2x-S_x-2y+S_y\\
&=x-y\ \text{for} \ x,y \le a_{N_1}.
\end{align*}
In particular, $TS$ is a continuous function of gradient $-1$ or $1$, which is decreasing on the intervals $([b_n,c_n])_{n=N_1}^{N_2}$. It follows from this description that $\tau(TS)=b_1$, and so we obtain $\mathcal{T}S$ is a function in $\mathcal{S}_{N_1,N_2}$ with
\[\left[a_n(\mathcal{T}S),b_n(\mathcal{T}S)\right]=[b_n-b_1,c_n-b_1]\]
for each $n$.

To complete the proof, it remains to check that
\[c_n=\min\left\{b_n+M_{b_n}-S_{b_n},a_{n+1}\right\},\]
with $a_{N_2+1}=\infty$ in the case $N_2<\infty$. For $n<N_2$, we have that $S_{a_{n+1}}=S_{b_n}+a_{n+1}-b_n$, and so
\begin{align*}
\min\left\{b_n+M_{b_n}-S_{b_n},a_{n+1}\right\}&=\min\left\{b_n+M_{b_n}-S_{a_{n+1}}+a_{n+1}-b_n,a_{n+1}\right\}\\
&=a_{n+1}+
\min\{M_{b_n}-S_{a_{n+1}},0\}.
\end{align*}
Now, if $M_{b_n}\geq S_{a_{n+1}}$, then the above expression is equal to $a_{n+1}$. By the definition of $c_n$ at \eqref{mbc2}, it is clear that  $c_n=a_{n+1}$ in this case as well. On the other hand, if $M_{b_n}<S_{a_{n+1}}$, then the above expression is equal to $a_{n+1}+M_{b_n}-S_{a_{n+1}}$. Moreover, we also have that $a_{n+1}-c_{n}=S_{a_{n+1}}-S_{c_n}=S_{a_{n+1}}-M_{b_n}$, i.e.\ $c_n=a_{n+1}+M_{b_n}-S_{a_{n+1}}$, as desired. Finally, when $N_2<\infty$, the argument for $n=N_2$ is similar.
\end{proof}

\subsection{Finite configurations}

We are now ready to prove Theorem \ref{mainthm1}, which we recall concerns the dynamics of the ultra-discrete Toda lattice started from a finite initial configuration.

\begin{proof}[Proof of Theorem \ref{mainthm1}] Let $N\in\mathbb{N}$, $((Q_n)_{n={1}}^{N}, (E_n)_{n=1}^{N-1})\in(0,\infty)^{2N-1}$, and $S$ be the corresponding path encoding, which is clearly an element of $\mathcal{S}^\mathcal{T}_{1,N}$. In particular, this allows us to apply Lemma \ref{lem:invariant} to deduce that $\mathcal{T}S\in \mathcal{S}_{1,N}$, and it remains to prove that
\begin{align*}
(\mathcal{T}Q)_n &= b_{n}(\mathcal{T}S)-a_{n}(\mathcal{T}S), \\
(\mathcal{T}E)_n &= a_{n+1}(\mathcal{T}S)-b_{n}(\mathcal{T}S).
\end{align*}
Denoting $a_n(S),b_n(S)$ by $a_n,b_n$ for simplicity, Lemma \ref{lem:invariant} further implies that the above equations are equivalent to
\begin{align}
(\mathcal{T}Q)_n &= c_n-b_n, \label{eq:Q}  \\
(\mathcal{T}E)_n &=b_{n+1}-c_n, \label{eq:E}
\end{align}
where $c_n:=\min\{b_n+M_{b_n}-S_{b_n},a_{n+1}\}$. Since the dynamics of the ultra-discrete Toda lattice are given by
\begin{align*}
(\mathcal{T}Q)_{n} &=\min \left\{ \sum_{k=1}^n Q_k-\sum_{k=1}^{n-1}(\mathcal{T}Q)_k,E_n\right\}=\min \left\{ b_n+\sum_{k=1}^n Q_k-\sum_{k=1}^{n-1}(\mathcal{T}Q)_k,a_{n+1}\right\}-b_n,\\
(\mathcal{T}E)_n &=Q_{n+1}+E_n-(\mathcal{T}Q)_n=b_{n+1}-b_n-(\mathcal{T}Q)_n,
\end{align*}
we only need to prove that
\begin{align}
\sum_{k=1}^nQ_k -\sum_{k=1}^{n-1}(\mathcal{T}Q)_k = M_{b_n}-S_{b_n} \label{eq:W}
\end{align}
for each $n=1,2,\dots,N$. We will do this by induction. For $n=1$, \eqref{eq:W} holds since
\[\sum_{k=1}^1Q_k -\sum_{k=1}^{0}(\mathcal{T}Q)_k=Q_1=b_1-a_1=b_1=0-(-b_1)=M_{b_1}-S_{b_1}.\]
Next, suppose that \eqref{eq:W} holds for some $n \le N-1$, and note that this implies that we also have \eqref{eq:Q} and \eqref{eq:E} for indices up to this value. In particular, since
\eqref{eq:Q} also holds for $n$,
\begin{align*}
\sum_{k=1}^{n+1}Q_k -\sum_{k=1}^{n}(\mathcal{T}Q)_k = M_{b_n}-S_{b_n} + Q_{n+1}-(\mathcal{T}Q)_{n}=M_{b_n}-S_{b_n}+
b_{n+1}-a_{n+1}-c_n+b_n.
\end{align*}
Hence, to prove \eqref{eq:W} for $n+1$, it is enough to show that
\[M_{b_{n+1}}-S_{b_{n+1}}=M_{b_n}-S_{b_n}+b_{n+1}-a_{n+1}-c_n+b_n.\]
Since $S_{b_{n+1}}-S_{b_n}=(a_{n+1}-b_n)-(b_{n+1}-a_{n+1})$, this is equivalent to showing
\[M_{b_{n+1}}-M_{b_n}=a_{n+1}-c_n.\]
If $S_{a_{n+1}} \le M_{b_n}$, then $M_{b_{n+1}}=M_{b_n}$ and $a_{n+1}=c_n$ holds (cf.\ the proof of Lemma \ref{lem:invariant}). Also, if $S_{a_{n+1}} > M_{b_n}$, then $M_{b_{n+1}}-M_{b_n}=S_{a_{n+1}}-M_{b_n}=S_{a_{n+1}}-S_{c_n}=a_{n+1}-c_n$. So, the desired relation holds.
\end{proof}

To complete this subsection, we show, as a simple corollary of Theorem \ref{mainthm1}, that $\mathcal{T}$ preserves mass.

\begin{cor}\label{masspres} Let $N\in\mathbb{N}$ and $((Q_n)_{n=1}^{N},(E_n)_{n=1}^{N-1})\in (0,\infty)^{2N-1}$. It then holds that
\[\sum_{n=1}^NQ_n=\sum_{n=1}^N(\mathcal{T}Q)_n.\]
\end{cor}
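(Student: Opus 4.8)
The plan is to derive the mass-preservation identity directly from the dynamics equations \eqref{dynamics}, exploiting the telescoping structure of the two recursions, rather than appealing to the path-encoding picture. The key observation is that the equation $(\mathcal{T}E)_n = Q_{n+1}+E_n-(\mathcal{T}Q)_n$ can be rearranged to read $(\mathcal{T}Q)_n - Q_{n+1} = E_n - (\mathcal{T}E)_n$, which relates the discrepancy between $\mathcal{T}Q$ and $Q$ to the discrepancy between $E$ and $\mathcal{T}E$. Summing such relations over $n$ and watching the boundary terms should produce the desired equality.

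Concretely, I would first sum the second line of \eqref{dynamics} over $n=1,\dots,N-1$ to obtain
\[
\sum_{n=1}^{N-1}(\mathcal{T}E)_n = \sum_{n=1}^{N-1}Q_{n+1} + \sum_{n=1}^{N-1}E_n - \sum_{n=1}^{N-1}(\mathcal{T}Q)_n.
\]
Since $\mathcal{T}$ preserves the length of the configuration, Theorem \ref{mainthm1} guarantees that $((\mathcal{T}Q)_n)_{n=1}^N$ and $((\mathcal{T}E)_n)_{n=1}^{N-1}$ are genuine configurations in $\mathcal{C}_{1,N}$, so that the gap lengths $\mathcal{T}E$ and the original gap lengths $E$ occupy the same index range $1,\dots,N-1$. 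Rearranging the displayed identity gives
\[
\sum_{n=1}^{N}Q_n - (\mathcal{T}Q)_N - Q_1 = \sum_{n=1}^{N-1}(\mathcal{T}Q)_n + \sum_{n=1}^{N-1}\bigl((\mathcal{T}E)_n - E_n\bigr),
\]
so the claim $\sum_n Q_n=\sum_n(\mathcal{T}Q)_n$ reduces to showing that the total empty-space length is also conserved, i.e.\ $\sum_{n=1}^{N-1}E_n = \sum_{n=1}^{N-1}(\mathcal{T}E)_n$, together with the boundary contribution of $(\mathcal{T}Q)_N$ and $Q_1$.

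Rather than juggle both conservation laws simultaneously, the cleaner route is to read off the total length of the configuration from the path encoding. The path $S$ associated with $((Q_n)_{n=1}^N,(E_n)_{n=1}^{N-1})$ is supported on the interval $[0,\sum_{n=1}^N Q_n + \sum_{n=1}^{N-1}E_n]$, being constant (with gradient $+1$) beyond the final local minimum $b_N(S)$; equivalently, the rightmost point of the support, $b_N(S)=\sum_{n=1}^N Q_n+\sum_{n=1}^{N-1}E_n$, records the \emph{total} length. By Lemma \ref{lem:invariant}, the same quantity for $\mathcal{T}S$ equals $b_N(\mathcal{T}S)$, and I would use the explicit formula $b_N(\mathcal{T}S)=\min\{b_N(S)+M_{b_N(S)}-S_{b_N(S)},a_{N+1}(S)\}-b_1(S)$ with $a_{N+1}=\infty$ to compute this. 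The main obstacle is keeping the boundary terms straight: one must verify that $b_1(S)=Q_1$ and that $M_{b_N(S)}-S_{b_N(S)}=\sum_{k=1}^N Q_k-\sum_{k=1}^{N-1}(\mathcal{T}Q)_k$ (this last identity is exactly \eqref{eq:W} established in the proof of Theorem \ref{mainthm1}), after which the total mass $\sum_n Q_n$ of $\mathcal{T}S$ equals $\tfrac12\bigl(b_N(\mathcal{T}S)+ S_{b_N(\mathcal{T}S)}(\mathcal{T}S)\bigr)$, and a short computation comparing the height of the path at its rightmost minimum before and after the transformation yields the equality. Since the arithmetic is elementary once \eqref{eq:W} is invoked at $n=N$, I expect no conceptual difficulty beyond careful bookkeeping of the finitely many boundary contributions.
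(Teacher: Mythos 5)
Your main route---reading the total mass off the path encoding at its rightmost local minimum, via Lemma \ref{lem:invariant} and Theorem \ref{mainthm1}---is viable and geometrically the same idea as the paper's, but as written it has two problems that are more than bookkeeping. First, the mass-reading formula has the wrong sign: for the path encoding $\tilde S$ of any configuration $((\tilde Q_n),(\tilde E_n))$ one has $\tilde S_{b_N(\tilde S)}=b_N(\tilde S)-2\sum_n\tilde Q_n$, so the mass is $\tfrac12\bigl(b_N(\tilde S)-\tilde S_{b_N(\tilde S)}\bigr)$; your expression with the plus sign computes the total empty space $\sum_n\tilde E_n$ instead. Second, and this is the genuine gap, the ``short computation'' you defer is exactly where circularity threatens: if you evaluate the height $(\mathcal{T}S)_{b_N(\mathcal{T}S)}$ by using Theorem \ref{mainthm1} (so that the height equals $\sum_n(\mathcal{T}E)_n-\sum_n(\mathcal{T}Q)_n$), the mass formula collapses to the tautology $b_N(\mathcal{T}S)=\sum_n(\mathcal{T}Q)_n+\sum_n(\mathcal{T}E)_n$ and yields nothing. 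The height must instead be computed from Pitman's transformation itself: since $M_0=0$ we have $(TS)_{b_1}=-S_{b_1}=Q_1$, and since $S_{c_N}=M_{c_N}=M_{b_N}$ we have $(TS)_{c_N}=M_{b_N}$, whence
\[
(\mathcal{T}S)_{b_N(\mathcal{T}S)}=(TS)_{c_N}-(TS)_{b_1}=M_{b_N}-Q_1 ;
\]
then, using $b_N(\mathcal{T}S)=c_N-b_1$, $b_1=Q_1$ and $c_N=b_N+M_{b_N}-S_{b_N}$ (recall $a_{N+1}=\infty$ here),
\[
\sum_{n=1}^N(\mathcal{T}Q)_n=\tfrac12\bigl(b_N(\mathcal{T}S)-(\mathcal{T}S)_{b_N(\mathcal{T}S)}\bigr)
=\tfrac12\bigl(c_N-M_{b_N}\bigr)=\tfrac12\bigl(b_N-S_{b_N}\bigr)=\sum_{n=1}^NQ_n.
\]
Note that once this is done, the identity \eqref{eq:W} you planned to invoke is not needed at all: the terms $M_{b_N}-S_{b_N}$ cancel on their own. (Alternatively, one can keep the tautological height, in which case \eqref{eq:W} plus your telescoped identity give two linear equations that force the conclusion; but then both halves of your proposal are needed, not either one alone.)

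Two smaller points. Your displayed rearrangement in the telescoping part contains a spurious $-(\mathcal{T}Q)_N$; the correct identity is $\sum_{n=1}^NQ_n-Q_1=\sum_{n=1}^{N-1}(\mathcal{T}Q)_n+\sum_{n=1}^{N-1}\bigl((\mathcal{T}E)_n-E_n\bigr)$, and be aware that $\sum_nE_n$ is \emph{not} conserved (for $N=2$ and $(Q_1,E_1,Q_2)=(1,3,2)$ one gets $(\mathcal{T}E)_1=4$), so that reduction was never going to split into two separate conservation laws; the right statement is $\sum_n\bigl((\mathcal{T}E)_n-E_n\bigr)=(\mathcal{T}Q)_N-Q_1$. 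For comparison, the paper's own proof is the streamlined version of your idea: it observes that $S_x=x-2\sum_nQ_n$ for all $x\ge b_N$, that $(TS)_x=S_x$ for $x\ge c_N$ (because $S$ has rejoined its running maximum there), hence $(\mathcal{T}S)_x=x-2\sum_nQ_n$ for $x\ge c_N-b_1$, and compares this with $(\mathcal{T}S)_x=x-2\sum_n(\mathcal{T}Q)_n$ for $x\ge b_N(\mathcal{T}S)=c_N-b_1$. Comparing the eventual linear behaviour on a ray, rather than the value at the single point $b_N(\mathcal{T}S)$, avoids both the induction identity \eqref{eq:W} and the telescoping entirely.
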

\begin{proof} For $x\geq b_N$, it holds that
\begin{equation}\label{yyy}
S_x=x-2\sum_{n=1}^N(b_n-a_n)=x-2\sum_{n=1}^NQ_n,
\end{equation}
where $S$ is the path encoding of $((Q_n)_{n=1}^{N},(E_n)_{n=1}^{N-1})$. Moreover, for $x\geq c_N$, as defined in the proof of Lemma \ref{lem:invariant}, it holds that $S_x=M_x$. In particular, for $x\geq c_N$, we have $(TS)_x=S_x$. By the definition of $\mathcal{T}S$, it follows that, for $x\geq c_N-b_1$,
\[(\mathcal{T}S)_x=(TS)_{x+b_1}-(TS)_{b_1}=S_{x+b_1}-b_1=x-2\sum_{n=1}^NQ_n.\]
Since we also have, analogously to \eqref{yyy}, that, for $x\geq b_N(\mathcal{T}S)=c_N-b_1$,
\[(\mathcal{T}S)_x=x-2\sum_{n=1}^N(\mathcal{T}Q)_n,\]
the result follows.
\end{proof}

\subsection{Periodic configurations}\label{periodicsec}

In this subsection, we introduce the path encoding of the ultra-discrete periodic Toda lattice. For this model, which was first presented in \cite{KT}, we describe the current state by a vector of the form $((Q_n)_{n=1}^N, (E_n)_{n=1}^N) \in (0,\infty)^{2N}$ for some $N \in \N$. Although it appears we have an extra variable to the non-periodic finite configuration case, this is not so, because we assume
that $L=\sum_{n=1}^N Q_n+ \sum_{n=1}^N E_n$ for some fixed $L >0$. Moreover, in order to define the dynamics,
we suppose that $\sum_{n=1}^N Q_n < \frac{L}{2}$. The collection of such configurations will be denoted by $\mathcal{C}_{(L),N}^{Per}$, and we further associate with $((Q_n)_{n=1}^N, (E_n)_{n=1}^N)\in \mathcal{C}_{(L),N}^{Per}$ an element $((Q_n)_{n \in \Z}, (E_n)_{n \in \Z})\in\mathcal{C}^\mathcal{T}$ by extending the sequences $(Q_n)_{n=1}^N$ and $(E_n)_{n=1}^N$ periodically. Introducing the additional notation $(D_n)_{n=1}^N$ for convenience, the dynamics of the system are given by the following adaptation of
\eqref{dynamics}:
\begin{align}
(\mathcal{T}Q)_n &:= \min \left\{ Q_n-D_n,E_n\right\},  \nonumber \\
(\mathcal{T}E)_n &:= E_n+Q_{n+1}-(\mathcal{T}Q)_n,  \nonumber\\
D_n &:= \min_{0 \le k \le N-1}\sum_{\ell=1}^k (E_{n-\ell}-Q_{n-\ell}).  \label{dynamics:p}
\end{align}
(In this definition, we make use of the periodic extension of $((Q_n)_{n=1}^N, (E_n)_{n=1}^N)$.) Given a state vector  $((Q_n)_{n=1}^N, (E_n)_{n=1}^N)\in \mathcal{C}_{(L),N}^{Per}$, we define an associated path encoding $S$ to be the element of $\mathcal{S}^\mathcal{T}$ associated with its periodic extension $((Q_n)_{n \in \Z}, (E_n)_{n \in \Z})$. We then have the following adaptation of Theorem \ref{mainthm1} to the periodic case. See Figure \ref{periodicfig} for an example realisation of the dynamics, as described by the unshifted Pitman's transformation, and in the original coordinates.

\begin{figure}[t]
\centering
\includegraphics[width = 0.6\textwidth]{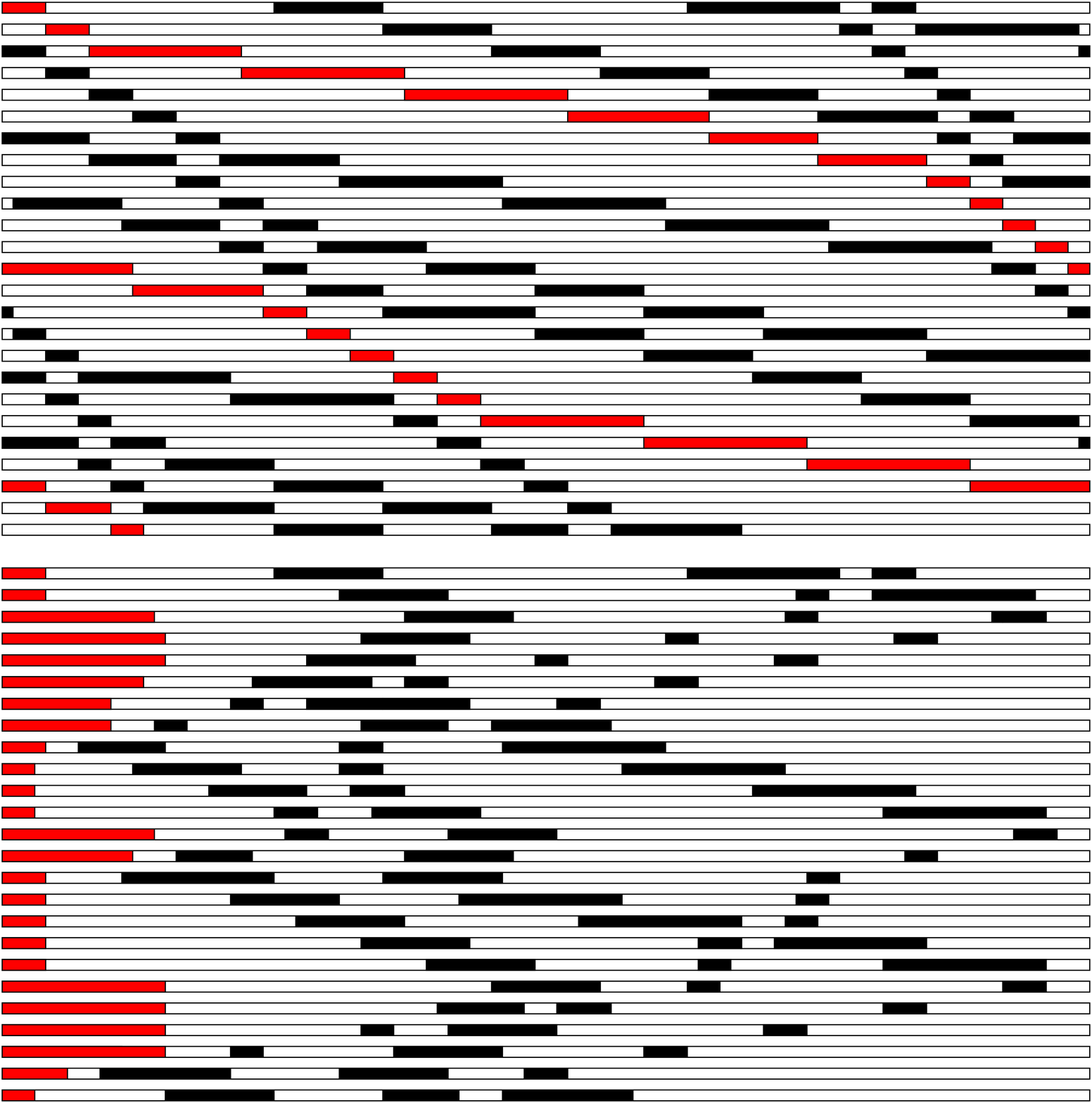}
\caption{First 25 configurations for an example realisation of the periodic ultra-discrete Toda lattice; the top part of the figure shows the spatial position of the configuration, as determined by the unshifted Pitman's transformation, and the bottom part of the figure shows the configuration shifted so that the interval of length $Q_1$ (shown in red in both depictions) is placed first.}\label{periodicfig}
\end{figure}

\begin{thm}\label{mainthm2}
Let $L>0$, $N\in\mathbb{N}$, $((Q_n)_{n=1}^N,(E_n)_{n=1}^{N})\in\mathcal{C}_{(L),N}^{Per}$, and $S$ be the associated path encoding, as described above. It then holds that $(((\mathcal{T}Q)_n)_{n=1}^N,((\mathcal{T}E)_n)_{n=1}^{N})$, as given by \eqref{dynamics:p}, has path encoding $\mathcal{T}S$, as defined by \eqref{curlytpath}. Moreover, it holds that $(((\mathcal{T}Q)_n)_{n=1}^N,((\mathcal{T}E)_n)_{n=1}^{N})\in\mathcal{C}_{(L),N}^{Per}$.
\end{thm}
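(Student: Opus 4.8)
The plan is to deduce Theorem \ref{mainthm2} from the infinite-configuration machinery already in place, since by construction the path encoding $S$ of a periodic state \emph{is} the path encoding of its periodic extension, which is an element of $\mathcal{S}^\mathcal{T}_{-\infty,\infty}$. First I would apply Lemma \ref{lem:invariant} to this extension; exactly as in the proof of Theorem \ref{mainthm1}, this gives $\mathcal{T}S\in\mathcal{S}_{-\infty,\infty}$ with
\[
(\mathcal{T}Q)_n=c_n-b_n,\qquad (\mathcal{T}E)_n=b_{n+1}-c_n,\qquad c_n:=\min\left\{b_n+M_{b_n}-S_{b_n},a_{n+1}\right\},
\]
where I again abbreviate $a_n=a_n(S)$, $b_n=b_n(S)$. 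Since $a_{n+1}-b_n=E_n$ and $b_{n+1}-b_n=Q_{n+1}+E_n$, matching these expressions to \eqref{dynamics:p} reduces the first assertion to proving the single identity $M_{b_n}-S_{b_n}=Q_n-D_n$ for every $n$.

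This identity is the main obstacle, and is where the periodic structure enters. The local maxima of $S$ lying at or below $b_n$ are precisely the points $a_m$ with $m\le n$, so $M_{b_n}=\sup_{m\le n}S_{a_m}$. From the gradients of $S$ one has $S_{a_{m}}-S_{a_n}=\sum_{j=m}^{n-1}(Q_j-E_j)$ and $S_{b_n}=S_{a_n}-Q_n$. The hypothesis $\sum_{n=1}^N Q_n<L/2$ is equivalent to positivity of the vertical drift $V:=\sum_{n=1}^N(E_n-Q_n)=L-2\sum_{n=1}^NQ_n$ over one period; since passing from $a_m$ to $a_{m-N}$ decreases $S$ by $V>0$, the supremum defining $M_{b_n}$ is attained within the last period $\{n-N+1,\dots,n\}$. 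Writing $m=n-k$ and reindexing then gives
\[
M_{b_n}-S_{a_n}=\max_{0\le k\le N-1}\sum_{\ell=1}^{k}(Q_{n-\ell}-E_{n-\ell})=-\min_{0\le k\le N-1}\sum_{\ell=1}^{k}(E_{n-\ell}-Q_{n-\ell})=-D_n,
\]
whence $M_{b_n}-S_{b_n}=(M_{b_n}-S_{a_n})+(S_{a_n}-S_{b_n})=Q_n-D_n$, as required. (The same drift positivity guarantees $M_0<\infty$, so the extension indeed lies in $\mathcal{S}^\mathcal{T}$.)

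For the closure statement, I would exploit that the periodicity of the configuration makes $S$ quasi-periodic, namely $S_{x+L}=S_x+V$. This propagates through the transformation: $M_{x+L}=M_x+V$, hence $(TS)_{x+L}=(TS)_x+V$, and finally $(\mathcal{T}S)_{x+L}=(\mathcal{T}S)_x+V$. Comparing the points $b_m$, $c_m$ under a shift by $L$ then yields $b_{n+N}=b_n+L$ and $c_{n+N}=c_n+L$ (the latter using $a_{n+N+1}=a_{n+1}+L$ together with the $N$-periodicity of $M_{b_n}-S_{b_n}=Q_n-D_n$), so that $(\mathcal{T}Q)_{n+N}=(\mathcal{T}Q)_n$ and $(\mathcal{T}E)_{n+N}=(\mathcal{T}E)_n$, giving periodicity with period $N$. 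Telescoping gives $\sum_{n=1}^N\big((\mathcal{T}Q)_n+(\mathcal{T}E)_n\big)=b_{N+1}-b_1=L$, while summing the one-period increments of $\mathcal{T}S$ gives $\sum_{n=1}^N(\mathcal{T}E)_n-\sum_{n=1}^N(\mathcal{T}Q)_n=V$; solving these two relations recovers mass conservation $\sum_{n=1}^N(\mathcal{T}Q)_n=\sum_{n=1}^N Q_n<L/2$. Positivity is then immediate from the min-formula: the $k=0$ term shows $D_n\le0$, so $(\mathcal{T}Q)_n=\min\{Q_n-D_n,E_n\}>0$, and consequently $(\mathcal{T}E)_n=E_n+Q_{n+1}-(\mathcal{T}Q)_n\ge Q_{n+1}>0$. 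Together these place the new configuration in $\mathcal{C}_{(L),N}^{Per}$.
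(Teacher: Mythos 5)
Your proposal is correct and follows essentially the same route as the paper: apply Lemma \ref{lem:invariant} to the periodic extension, reduce both dynamics to the single identity $M_{b_n}-S_{b_n}=Q_n-D_n$, and prove that identity by using the positive drift $\sum_{n=1}^N(E_n-Q_n)=L-2\sum_{n=1}^NQ_n>0$ to confine the past maximum to the last period and rewriting $D_n$ as a telescoping sum of values of $S$ at local maxima. The only divergence is at the end, where the paper certifies membership of $\mathcal{C}_{(L),N}^{Per}$ just by the length identity $a_{N+1}(\mathcal{T}S)=b_{N+1}-b_1=L$ (deferring mass conservation to a subsequent corollary), whereas you verify periodicity, total length, $\sum_{n=1}^N(\mathcal{T}Q)_n=\sum_{n=1}^NQ_n<L/2$ and positivity directly — a more complete, but compatible, treatment of that claim.
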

\begin{proof} In the proof, we again denote $a_n(S),b_n(S)$ by $a_n,b_n$ for simplicity. Since Lemma \ref{lem:invariant} yields that $\mathcal{T}S \in \mathcal{S}$, to establish the first claim of the theorem we only need to prove that
\begin{align*}
(\mathcal{T}Q)_n &= b_{n}(\mathcal{T}S)-a_n(\mathcal{T}S), \\
(\mathcal{T}E)_n &= a_{n+1}(\mathcal{T}S)-b_{n}(\mathcal{T}S).
\end{align*}
By the definition of $\mathcal{T}S$ and Lemma \ref{lem:invariant}, this is equivalent to
\begin{align*}
(\mathcal{T}Q)_n &= c_n-b_n, \\
(\mathcal{T}E)_n &=b_{n+1}-c_n,
\end{align*}
where $c_n:=\min\{b_n+M_{b_n}-S_{b_n},a_{n+1}\}$. Since the dynamics of the ultra-discrete periodic Toda lattice is given by
\begin{align*}
(\mathcal{T}Q)_n &= \min\{Q_n-D_n, E_n\}= \min\{2b_n-a_n-D_n, a_{n+1}\}-b_n, \\
(\mathcal{T}E)_n &= E_n+Q_{n+1}-(\mathcal{T}Q)_n=b_{n+1}-b_n- (\mathcal{T}Q)_n, \\
D_n &= \min_{0 \le k \le N-1}\sum_{\ell=1}^k (E_{n-\ell}-Q_{n-\ell}),
\end{align*}
it will be sufficient to prove that
\begin{align}
b_n-a_{n}-D_n= M_{b_n}-S_{b_n} \label{eq:Wp}
\end{align}
for $n \in \Z$. Now, since we assume that $\sum_{n=1}^N Q_n <\frac{L}{2}$, it holds that, for any $x \in \R$,
\begin{equation}\label{sper}
S_{x+L}-S_x= \sum_{n=1}^NE_n - \sum_{n=1}^NQ_n >0.
\end{equation}
Moreover, since $\{a_n:\:n \in \N\}$ is the set of local maxima of $S$, we have
\[M_{b_n}=\max_{m \le n} S_{a_m} =\max_{n-N+1 \le m \le n}S_{a_m}.\]
On the other hand, for any $n,k$,
\begin{align*}
\sum_{\ell=1}^k (E_{n-\ell}-Q_{n-\ell})
&=\sum_{\ell=1}^k \{(S_{a_{n-\ell+1}}-S_{b_{n-\ell}})+(S_{b_{n-\ell}}-S_{a_{n-\ell}})\}\\
&=\sum_{\ell=1}^{k} (S_{a_{n-\ell+1}}-S_{a_{n-\ell}})\\
&=S_{a_{n}}-S_{a_{n-k}}.
\end{align*}
Hence we have
\begin{align*}
b_n-a_{n}-D_n & =-S_{b_n}+S_{a_{n}}- \min_{0 \le k \le N-1} (S_{a_{n}}-S_{a_{n-k}}) \\
& = -S_{b_n} + \max_{0 \le k \le N-1} S_{a_{n-k}} = -S_{a_n} + \max_{n-N+1 \le m \le n}S_{a_m},
\end{align*}
which establishes \eqref{eq:Wp}. Finally, to prove the claim that $(((\mathcal{T}Q)_n)_{n=1}^N,((\mathcal{T}E)_n)_{n=1}^{N})\in\mathcal{C}_{(L),N}^{Per}$, we simply observe from Lemma \ref{lem:invariant} that
\begin{equation}\label{intervals}
a_{N+1}(\mathcal{T}S)=b_{N+1}-b_1=L.
\end{equation}
\end{proof}

To conclude this subsection, we demonstrate the periodic version of the mass preservation result of Corollary \ref{masspres}.

\begin{cor} Let $L>0$, $N\in\mathbb{N}$ and $((Q_n)_{n=1}^N,(E_n)_{n=1}^{N})\in\mathcal{C}_{(L),N}^{Per}$. It then holds that
\[\sum_{n=1}^NQ_n=\sum_{n=1}^N(\mathcal{T}Q)_n.\]
\end{cor}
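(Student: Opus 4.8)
The plan is to exploit two conservation laws simultaneously. On the one hand, Theorem~\ref{mainthm2} already guarantees that $(((\mathcal{T}Q)_n)_{n=1}^N,((\mathcal{T}E)_n)_{n=1}^N)\in\mathcal{C}_{(L),N}^{Per}$, which encodes the conservation of total length:
\[
\sum_{n=1}^N(\mathcal{T}Q)_n+\sum_{n=1}^N(\mathcal{T}E)_n=L=\sum_{n=1}^NQ_n+\sum_{n=1}^NE_n.
\]
On the other hand, I will show that the per-period increment (drift) of the path encoding is also preserved by $\mathcal{T}$. Writing $\delta:=\sum_{n=1}^NE_n-\sum_{n=1}^NQ_n$, equation~\eqref{sper} identifies $\delta$ as the drift of $S$, i.e.\ $S_{x+L}-S_x=\delta$ for all $x$; the analogous identity for $\mathcal{T}S$, valid since $\mathcal{T}S$ is itself the path encoding of a periodic configuration by Theorem~\ref{mainthm2}, reads $(\mathcal{T}S)_{x+L}-(\mathcal{T}S)_x=\sum_{n=1}^N(\mathcal{T}E)_n-\sum_{n=1}^N(\mathcal{T}Q)_n$. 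Once I establish that this latter drift also equals $\delta$, the two displayed relations form a linear system in the unknowns $\sum_{n=1}^N(\mathcal{T}Q)_n$ and $\sum_{n=1}^N(\mathcal{T}E)_n$ whose solution is precisely $\sum_{n=1}^N(\mathcal{T}Q)_n=\sum_{n=1}^NQ_n$, as required.

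The main step is therefore to check that $\mathcal{T}$ preserves the drift. Here the key observation is that the past-maximum process inherits the drift of $S$: substituting $y=z+L$ and using $S_{z+L}=S_z+\delta$ gives
\[
M_{x+L}=\sup_{y\le x+L}S_y=\sup_{z\le x}\left(S_z+\delta\right)=M_x+\delta,
\]
for every $x\in\R$. Feeding this into the definition~\eqref{pitman} of Pitman's transformation yields
\[
(TS)_{x+L}-(TS)_x=2\left(M_{x+L}-M_x\right)-\left(S_{x+L}-S_x\right)=2\delta-\delta=\delta.
\]
Since $\mathcal{T}S=\theta^\tau(TS)$ is merely a spatial translate of $TS$, and translation does not affect an increment taken over a window of fixed width $L$, I conclude $(\mathcal{T}S)_{x+L}-(\mathcal{T}S)_x=\delta$, which is the desired drift identity.

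I expect the main obstacle to be justifying the reindexing in the computation of $M_{x+L}$ cleanly — in particular, that the suprema involved are finite so that the manipulation is legitimate. This is where the standing assumption $\sum_{n=1}^NQ_n<\tfrac{L}{2}$ (equivalently $\delta>0$) does the work: positivity of the drift forces $S_y\to-\infty$ as $y\to-\infty$, guaranteeing $M_0<\infty$ (so that $S\in\mathcal{S}^\mathcal{T}$ and $TS$, $\mathcal{T}S$ are well defined) and ensuring the past maxima behave as claimed. With the drift identity in hand, combining it with the total-length identity from Theorem~\ref{mainthm2} and solving the resulting linear system is a routine final step.
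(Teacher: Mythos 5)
Your proof is correct, and it shares the paper's underlying idea: the quantity that transfers through Pitman's transformation is the per-period increment (drift) of the path encoding, which together with the fixed total length $L$ pins down $\sum_n(\mathcal{T}Q)_n$. Where you differ is in how drift preservation is established. The paper anchors the computation at a special point: by \eqref{sper} there exists $n_0$ with $S_{a_{n_0}}=M_{a_{n_0}}$, and since $TS=S-2M_0$ at points where $S$ meets its past maximum, the increment of $TS$ over the window $[a_{n_0},a_{n_0+N}]$ equals that of $S$, namely $L-2\sum_nQ_n$; this is then identified, via the $L$-periodicity of the increments of $\mathcal{T}S$ and \eqref{intervals}, with $L-2\sum_n(\mathcal{T}Q)_n$. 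You instead prove the universal identity $M_{x+L}=M_x+\delta$, which immediately gives $(TS)_{x+L}-(TS)_x=\delta$ for \emph{every} $x$, so no anchor point is needed, and you then close the argument by solving the linear system formed by the drift identity and the total-length identity. Your computation of $M_{x+L}$ is clean and legitimate (the supremum manipulations are fine since $\delta>0$ forces $S_y\to-\infty$ as $y\to-\infty$, so all past maxima are finite), and it is arguably more robust than locating $n_0$. One point to flag: you invoke Theorem \ref{mainthm2}'s claim that the image configuration lies in $\mathcal{C}_{(L),N}^{Per}$, and strictly speaking full membership includes the condition $\sum_n(\mathcal{T}Q)_n<\tfrac{L}{2}$, which is really only confirmed by the mass-conservation corollary itself; however, the only part of the membership you actually use is the total-length identity $\sum_n(\mathcal{T}Q)_n+\sum_n(\mathcal{T}E)_n=L$, which the paper's proof of Theorem \ref{mainthm2} establishes independently via \eqref{intervals} (equivalently, $a_{N+1}(\mathcal{T}S)=L$ together with $a_1(\mathcal{T}S)=0$), so there is no circularity — but it would be worth saying this explicitly rather than citing membership in $\mathcal{C}_{(L),N}^{Per}$ wholesale.
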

\begin{proof} By \eqref{sper}, there exists an $n_0\in 1,\dots,N$ such that $S_{a_{n_0}}=M_{a_{n_0}}$. Hence
\[L-2\sum_{n=1}^NQ_n=S_{a_{n_0+N}}-S_{a_{n_0}}=(TS)_{a_{n_0+N}}-(TS)_{a_{n_0}}=(\mathcal{T}S)_{a_{n_0}+L-b_1}-(\mathcal{T}S)_{a_{n_0}-b_1},\]
where $b_1=b_1(S)$. Applying the fact that the increments of $\mathcal{T}S$ are $L$-periodic and also \eqref{intervals}, we further have that
\[(\mathcal{T}S)_{a_{n_0}+L-b_1}-(\mathcal{T}S)_{a_{n_0}-b_1}=(\mathcal{T}S)_{b_{N+1}-b_1}-(\mathcal{T}S)_{b_{N+1}-b_1-L}=L-2\sum_{n=1}^N(\mathcal{T}Q)_n,\]
from which the result follows.
\end{proof}

\section{Relation to the box-ball system on $\R$}\label{bbsr}

The BBS on $\R$ was introduced in \cite{CKST} as a means to describing scaling limits of discrete models in a high density regime. Specifically, with states of the system being described by continuous functions $S:\mathbb{R}\rightarrow \mathbb{R}$ with $S_0=0$ and $M_0=\sup_{x\leq 0}S_x<\infty$, the collection of which we will denote $\mathcal{S}_0^T$, its dynamics were defined in \cite{CKST} to be given by Pitman's transformation, as at \eqref{pitman}. In this section, we will show that the link between the dynamics of the ultra-discrete Toda lattice and the BBS can be extended beyond the simple case discussed in the previous section.

We will denote the subset of $C(\mathbb{R},\mathbb{R})$ of interest in this section by $\mathcal{S}_{N_1,N_2}$, where $(N_1,N_2)=(-\infty,\infty), (1,\infty),(-\infty,1)$ or $(1,N)$ for some $N \in \mathbb{N}$. NB. This notation conflicts with that used in the previous section, however the set we now introduce includes the former definition, and so there should not be confusion. Specifically, $\mathcal{S}_{N_1,N_2}$ will be the class of $S \in C(\mathbb{R},\mathbb{R})$ with $S_0=0$ that has local maxima at $(a_n)_{n=N_1}^{N_2}$, local minima at $(b_n)_{n=N_1}^{N_2}$ and is otherwise strictly increasing/decreasing, where $a_n <b_n <a_{n+1}$ and $\lim_{n \to \pm \infty}a_n= \pm \infty$ if $N_1$ or $N_2$ is not finite (cf.\ \eqref{noaccumulation}). In the case $(N_1,N_2)=(-\infty,\infty)$, we choose the indices so that $a_0 < 0 \le a_1$. For $S \in \mathcal{S}_{N_1,N_2}$, we let $Q_n=|S_{b_n}-S_{a_n}|$ for $N_1 \le n \le N_2$, $E_n =|S_{a_{n+1}}-S_{b_n}|$ for $N_1 \le n \le N_2-1$, and $\|S\|$ be the total variation of $S$, namely
\[\|S\|_x=\int_0^x |dS_y|,\]
which by assumption is well-defined and finite for each $x\in\mathbb{R}$. Clearly, a path $S\in\mathcal{S}_{N_1,N_2}$ is in one-to-one correspondence with the set of data
\begin{equation}\label{data}
\left(a_1, ((Q_n)_{n=N_1}^{N_2},(E_n)_{n=N_1}^{N_2-1}),\|S\|\right).
\end{equation}
Moreover, if $S\in \mathcal{S}^T_{N_1,N_2}:=\mathcal{S}_{N_1,N_2}\cap\mathcal{S}_0^T$, then it is not difficult to check that $TS\in\mathcal{S}_{N_1,N_2}$ for the same choice of $(N_1,N_2)$. Hence, to analyze the dynamics of BBS for functions in $\mathcal{S}^T_{N_1,N_2}$, we only need to study the dynamics of the data at \eqref{data} under $T$. In the following proposition, we show that these dynamics are essentially determined by the ultra-discrete Toda lattice equation. Hence, since the ultra-discrete Toda lattice equation is integrable, the corresponding initial value problem of BBS on $\R$ restricted to elements of $\mathcal{S}^T_{N_1,N_2}$ can be solved explicitly.

\begin{prop}\label{p31}
(i) Suppose $(N_1,N_2)=(1,\infty),(-\infty,1)$ or $(1,N)$ for some $N \in \mathbb{N}$. Then,
\[a_1(TS)=b_1(S), \qquad \|TS\|=\|S\|\]
and the dynamics of $ ((Q_n)_{n=N_1}^{N_2},(E_n)_{n=N_1}^{N_2-1})$ is given by the ultra-discrete Toda lattice equation \eqref{dynamics}. Namely, $Q_n(TS)= \mathcal{T}Q_n$ and $ E_n(TS)= \mathcal{T}E_n$.\\
(ii) Suppose $(N_1,N_2)=(-\infty,\infty)$. If $b_0(S) <0$, then
\[a_1(TS)=b_1(S), \qquad \|TS\|=\|S\|\]
and the dynamics of $ ((Q_n)_{n=-\infty}^{\infty},(E_n)_{n=-\infty}^{\infty})$ is given by the ultra-discrete Toda lattice equation \eqref{dynamics}. Namely, $Q_n(TS)= \mathcal{T}Q_n$ and $ E_n(TS)= \mathcal{T}E_n$. If $b_0(S) \ge 0$, then
\[a_1(TS)=b_0(S), \qquad \|TS\|=\|S\|\]
and the dynamics of $ ((Q_n)_{n=-\infty}^{\infty},(E_n)_{n=-\infty}^{\infty})$  is given by the ultra-discrete Toda lattice equation \eqref{dynamics} with a spatial shift. Namely, $Q_n(TS)= \mathcal{T}Q_{n-1}$ and $ E_n(TS)= \mathcal{T}E_{n-1}$.
\end{prop}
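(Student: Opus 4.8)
The plan is to build directly on Lemma~\ref{lem:invariant} and Theorem~\ref{mainthm1} by separating the three pieces of data at \eqref{data}---the total variation $\|S\|$, the height data $((Q_n),(E_n))$, and the distinguished position $a_1$---and following each under $T$ separately. The total variation is dealt with first and is essentially free: writing $d(TS)=2\,dM-dS$, and noting that the non-decreasing function $M$ increases only where $S=M$, with $dM=dS$ there and $dM=0$ elsewhere, one gets $|d(TS)|=|dS|$ as measures, whence $\|TS\|_x=\|S\|_x$ for all $x$. This gives $\|TS\|=\|S\|$ in every case.

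For the height data the key point is that $T$ depends only on the values of $S$ and commutes with increasing reparametrisations of the domain that fix the origin: if $\psi:\R\to\R$ is an increasing homeomorphism with $\psi(0)=0$ then $M^{S\circ\psi}_x=M^{S}_{\psi(x)}$, and hence $T(S\circ\psi)=(TS)\circ\psi$. I would apply this with $\psi=\|S\|$ and $\tilde S:=S\circ\psi^{-1}$. Then $\tilde S$ has gradient $\pm1$ and the same sequence of extremal values as $S$, so $\tilde S$ has height data $((Q_n),(E_n))$; and since $TS=(T\tilde S)\circ\psi$, the paths $TS$ and $T\tilde S$ share the same extremal values, while $\psi$ carries the extrema of $T\tilde S$ onto those of $TS$. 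This reduces the evolution of the heights to the gradient-$\pm1$ case already handled. The remark to make here is that the computations \eqref{eq:W}--\eqref{eq:E} in the proof of Theorem~\ref{mainthm1}, together with the identification (within the proof of Lemma~\ref{lem:invariant}) of the local maxima of $TS$ with the points $\{b_n(S)\}$ and the local minima with $\{c_n\}$, refer only to the values $S_{a_n},S_{b_n},M_{b_n}$ and their differences; they are therefore insensitive both to the parametrisation and to the location of the origin, and so apply to give $Q_n(TS)=\mathcal{T}Q_n$ and $E_n(TS)=\mathcal{T}E_n$ up to a relabelling of the index $n$.

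Fixing that relabelling, and thereby the value of $a_1(TS)$, is where the case distinction enters. Since the local maxima of $TS$ sit exactly at the positions $\{b_n(S)\}$, in the one-sided and finite cases the distinguished extremum is forced by the indexing convention (leftmost for $N_1=1$, rightmost for $N_2=1$), so $a_1(TS)=b_1(S)$ with the identification $a_n(TS)=b_n(S)$, giving part (i). In the bi-infinite case the labelling is fixed by $a_0(TS)<0\le a_1(TS)$, so $a_1(TS)$ is the smallest nonnegative element of $\{b_n(S)\}$. If $b_0(S)<0$ then, as $b_1(S)>a_1(S)\ge0$, this smallest value is $b_1(S)$, so $a_1(TS)=b_1(S)$ and $a_n(TS)=b_n(S)$, giving the unshifted dynamics; if $b_0(S)\ge0$ then (as $b_{-1}(S)<a_0(S)<0$) it is $b_0(S)$, so $a_1(TS)=b_0(S)$ and $a_n(TS)=b_{n-1}(S)$, which is precisely the one-step shift $Q_n(TS)=\mathcal{T}Q_{n-1}$, $E_n(TS)=\mathcal{T}E_{n-1}$ of part (ii). Along the way one checks $TS\in\mathcal{S}_{N_1,N_2}$ for the same $(N_1,N_2)$, which is immediate from the explicit description of its extrema and the non-accumulation assumption.

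I expect the main obstacle to lie not in the dynamics---whose content is already in Theorem~\ref{mainthm1}---but in the two reductions that surround it: justifying cleanly that the reparametrisation $\psi=\|S\|$ lets one import the gradient-$\pm1$ computations wholesale, and tracking the origin carefully enough that re-indexing the maxima of $TS$ yields the spatial shift exactly when $b_0(S)\ge0$ and not otherwise.
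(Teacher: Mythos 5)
Your proposal reaches the right conclusions, and its core mechanism is genuinely different from the paper's. The paper does not reduce to the $\pm1$-gradient case at all: it reproves everything directly in the continuous setting, computing $Q_n(TS)+E_n(TS)$ and $Q_n(TS)$ from \eqref{pitman} to get $Q_n(TS)=\min\{M_{a_n(S)}-S_{b_n(S)},E_n(S)\}$, and then establishing $M_{a_n(S)}-S_{b_n(S)}=\sum_{k=1}^nQ_k(S)-\sum_{k=1}^{n-1}Q_k(TS)$ by the same induction as in Theorem \ref{mainthm1}. Your total-variation argument and the identification $a_1(TS)=b_1(S)$ coincide with the paper's, and your index bookkeeping for part (ii) (maxima of $TS$ at $\{b_n(S)\}$, shift occurring exactly when $b_0(S)\ge0$ because of the convention $a_0<0\le a_1$) is correct and is essentially all the paper says about (ii). The equivariance $T(S\circ\psi)=(TS)\circ\psi$ for increasing homeomorphisms fixing $0$ is true, cleanly proved by your computation of the past maximum, and is an observation the paper does not make; if completed, your route explains Proposition \ref{p31} as Theorem \ref{mainthm1} ``modulo reparametrisation'' rather than as a parallel computation.

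There are, however, two genuine gaps in the reduction. First, $\psi=\|S\|$ is an increasing homeomorphism of $\R$ onto its \emph{range} only: nothing in the Section \ref{bbsr} definition of $\mathcal{S}_{N_1,N_2}$ forces $\|S\|_{\pm\infty}=\pm\infty$ (only finite variation on compacts and $a_n\to\pm\infty$ for infinite index sets are required), so $S$ may, say, increase to a finite limit after its last local minimum. Then $\tilde S=S\circ\psi^{-1}$ is defined only on a proper subinterval of $\R$ and is \emph{not} an element of the Section 2 class, whose members have gradient $\pm1$ on all of $\R$ (cf.\ \eqref{noaccumulation}); Lemma \ref{lem:invariant} and Theorem \ref{mainthm1} therefore cannot be quoted for $\tilde S$ as it stands. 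This is fixable---extend $\tilde S$ by unit-slope rays outside the range of $\psi$ and check that past maxima, hence $T$, are unchanged on the relevant region---but that step is currently missing, and you flag it without closing it. Second, Theorem \ref{mainthm1} treats only $(N_1,N_2)=(1,N)$, and Lemma \ref{lem:invariant} yields only the positions of the extrema of $\mathcal{T}S$, not the Toda equation; so for $(1,\infty)$, $(-\infty,1)$ and the bi-infinite case (ii), your reduction defers to $\pm1$-gradient statements that the paper never proves. The missing ingredient is the analogue of \eqref{eq:W}, whose base case uses $M_{a_1}=S_{a_1}$, a fact guaranteed by $N_1=1$ but false in general when $N_1=-\infty$; one must then also decide what the anchored sums in \eqref{dynamics} mean for indices extending to $-\infty$. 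The paper glosses these cases too (``dealt with similarly''), but since the entire point of your argument is to lean on the earlier results, supplying an infinite-index version of Theorem \ref{mainthm1} is a debt your proof owes, not merely a matter of exposition.
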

\begin{proof} (i) We will prove the result in the case $(N_1,N_2)=(1,N)$ for some $N \in \mathbb{N}$; the remaining cases are dealt with similarly. First, by assumption $S$ is strictly increasing on $(-\infty,a_1(S))$ and $(b_1(S),a_2(S))$ (where we interpret $a_2(S)=\infty$ if $N=1$), and strictly decreasing on $(a_1(S),b_1(S))$. It readily follows from the definition of Pitman's transformation at \eqref{pitman} that $TS$ is strictly increasing on $(-\infty,b_1(S))$, and strictly decreasing on $(b_1(S),b_1(S)+\varepsilon)$ for some $\varepsilon>0$. Hence $a_1(TS)=b_1(S)$, as claimed. Second, note that on the set $\{x:\:S_x=M_x\}$ (which consists of a finite collection of closed intervals), we have $dTS_x=dS_x$. Similarly, we have on $\{x:\:S_x<M_x\}$ (which consists of a finite collection of open intervals) that $dTS_x=-dS_x$. Since the set $\{x:\:S_x>M_x\}$ is empty, it follows that $\|TS\|=\|S\|$. Thus it remains to check the claim involving the ultra-discrete Toda lattice equation. To this end, we observe that
\begin{align}
Q_{n}(TS)+E_{n}(TS)&= TS_{a_n(TS)}+TS_{a_{n+1}(TS)}-2TS_{b_n(TS)}\nonumber\\
&=TS_{b_n(S)}+TS_{b_{n+1}(S)}-2\max\left\{M_{a_n(S)},2M_{a_n(S)} - S_{a_{n+1}(S)}\right\}\nonumber\\
&=2M_{a_n(S)}-S_{b_n(S)}+2M_{a_{n+1}(S)}-S_{b_{n+1}(S)}-2\max\left\{M_{a_n(S)},2M_{a_n(S)} - S_{a_{n+1}(S)}\right\}\nonumber\\
&=Q_{n+1}(S)+E_n(S)+2\min\left\{M_{a_{n+1}(S)}-S_{a_{n+1}(S)},M_{a_{n+1}(S)}-M_{a_n(S)}\right\}\nonumber\\
&= Q_{n+1}(S)+E_n(S)\label{bbsr_ud1}
\end{align}
for $n=1,2,\ldots,N-1$, where we have used that ${a_n(TS)}={b_n(S)}$ (which is checked similarly to the case when $n=1$, as described above), $TS_{b_n(TS)}=  \max\{M_{a_n(S)},2M_{a_n(S)} - S_{a_{n+1}(S)}\}-2M_0$ (which follows from \eqref{pitman}), and applied that $-S_{b_n(S)}=E_n(S)-S_{a_{n+1}(S)}$ and $-S_{b_{n+1}(S)}=Q_{n+1}-S_{a_{n+1}(S)}$. Moreover,
\begin{align}
Q_n(TS)&=TS_{a_n(TS)}-TS_{b_n(TS)}\nonumber\\
&= TS_{b_n(S)}  - \max\left\{M_{a_n(S)},2M_{a_n(S)} - S_{a_{n+1}(S)}\right\}\nonumber\\
&=2M_{a_n(S)}-S_{b_n(S)} -\max\left\{M_{a_n(S)},2M_{a_n(S)} - S_{a_{n+1}(S)}\right\}\nonumber\\
&=\min\left\{M_{a_n(S)}-S_{b_n(S)},E_n(S)\right\},\label{fff}
\end{align}
for $n=1,2,\ldots,N$, where we interpret $S_{a_{N+1}(S)}=\infty$. We next claim that
\begin{equation}\label{ilm}
M_{a_n(S)}-S_{b_n(S)}=\sum_{k=1}^nQ_k(S)-\sum_{k=1}^{n-1}Q_k(TS),
\end{equation}
which, similarly to the proof of Theorem \ref{mainthm1}, we prove by induction. The result is obvious by definition for $n=1$. Next, supposing $1\leq n\leq N-1$, we need to show that
\begin{equation}\label{thy}
M_{a_{n+1}(S)}-S_{b_{n+1}(S)}=M_{a_n(S)}-S_{b_n(S)}+Q_{n+1}(S)-Q_{n}(TS).
\end{equation}
The right-hand side here is given by
\begin{align*}
\lefteqn{M_{a_n(S)}-S_{b_n(S)}+S_{a_{n+1}(S)}-S_{b_{n+1}(S)}-\min\left\{M_{a_{n}(S)}-S_{b_{n}(S)},S_{a_{n+1}(S)}-S_{b_{n}(S)}\right\}}\\
&= \max\left\{S_{a_{n+1}(S)},M_{a_{n}(S)}\right\}-S_{b_{n+1}(S)}\\
&=M_{a_{n+1}(S)}-S_{b_{n+1}(S)},\qquad\qquad \qquad \qquad \qquad \qquad \qquad \qquad \qquad \qquad
\end{align*}
thus establishing \eqref{thy}, and consequently \eqref{ilm}. Returning to \eqref{fff}, this yields in turn that
\[Q_n(TS)=\min\left\{\sum_{k=1}^nQ_k(S)-\sum_{k=1}^{n-1}Q_k(TS),E_n(S)\right\}.\]
Together with \eqref{bbsr_ud1}, this establishes that the time evolution of the system is given by ultra-discrete Toda lattice equation \eqref{dynamics}, as desired.\\
(ii) This is proved similarly to (i), with the only difference being that care is needed about the indices depending on whether $b_0(S)$ is $<0$ or $\geq 0$.
\end{proof}

\section*{Acknowledgements}

DC would like to acknowledge the support of his JSPS Grant-in-Aid for Research Activity Start-up, 18H05832, MS would like to acknowledge the support of her JSPS Grant-in-Aid for Scientific Research (B), 16KT0021,
and ST would like to acknowledge the support of his JSPS Grant-in-Aid for Challenging Exploratory Research, 16KT0021.

\providecommand{\bysame}{\leavevmode\hbox to3em{\hrulefill}\thinspace}
\providecommand{\MR}{\relax\ifhmode\unskip\space\fi MR }
\providecommand{\MRhref}[2]{%
  \href{http://www.ams.org/mathscinet-getitem?mr=#1}{#2}
}
\providecommand{\href}[2]{#2}


\begin{thebibliography}{1}
\bibitem{CKST}
D.~A. Croydon, T.~Kato, M.~Sasada, and S.~Tsujimoto, \emph{Dynamics of the
  box-ball system with random initial conditions via pitman's transformation},
  preprint appears at arXiv:1806.02147, 2018.

\bibitem{CSDual}
D.~A. Croydon and M.~Sasada, \emph{Duality between box-ball systems of finite
  box and/or carrier capacity}, forthcoming, 2019.

\bibitem{CSMontreal}
\bysame, \emph{Invariant measures for the box-ball system based on stationary
  {M}arkov chains and periodic {G}ibbs measures}, forthcoming, 2019.

\bibitem{Ferrari}
P.~A. Ferrari, C.Nguyen, L.~Rolla, and M.~Wang, \emph{Soliton decomposition of
  the box-ball system}, preprint appears at arXiv:1806.02798, 2018.

\bibitem{FG}
P.~A. Ferrari and D.~Gabrielli, \emph{Bbs invariant measures with independent
  soliton components}, preprint appears at arXiv:1812.02437, 2018.

\bibitem{KT}
T. Kimijima and T. Tokihiro,
\emph{Initial-value problem of the discrete periodic {T}oda equation and its ultradiscretization},
Inverse Problems \textbf{18} (2002), 1705--1732.

\bibitem{MSTTT}
J. Matsukidaira, J. Satsuma, D. Takahashi, T. Tokihiro and M. Torii,
\emph{Toda-type cellular automaton and its N-soliton solution},
Phys. Lett. A \textbf{225} (1997) 287--295.

\bibitem{Nagai:sort} A.~Nagai, D.~Takahashi and T.~Tokihiro,
\emph{Soliton cellular automaton, {T}oda molecule equation and sorting algorithm}, Phys. Lett. A \textbf{255} (1999), 265--271.

\bibitem{Nagai}
A.~Nagai, T.~Tokihiro, and J.~Satsuma, \emph{Ultra-discrete {T}oda molecule
  equation}, Phys. Lett. A \textbf{244} (1998), 383--388.

\bibitem{Pitman}
J.~W. Pitman, \emph{One-dimensional {B}rownian motion and the three-dimensional
  {B}essel process}, Advances in Appl. Probability \textbf{7} (1975), no.~3,
  511--526.

\bibitem{takahashi1990}
D.~Takahashi and J.~Satsuma, \emph{A soliton cellular automaton}, J. Phys. Soc.
  Japan \textbf{59} (1990), 3514--3519.
\end{thebibliography}
\end{document}